\documentclass[letterpaper,11pt]{article}

\usepackage{amsfonts,amsmath,amsthm,amssymb,mathtools,dsfont}
\usepackage{geometry,color,appendix,graphicx,adjustbox,float,array,hyperref}
\usepackage{caption,subcaption}
\usepackage[boxed,figure]{algorithm2e} 

\setcounter{MaxMatrixCols}{30}%
\providecommand{\U}[1]{\protect\rule{.1in}{.1in}}

\newcommand{\R}{\mathbb{R}}
\newcommand{\E}{\mathbb{E}\thinspace }

\newcommand{\PP}{\mathbb{P}\thinspace }

\newtheorem{theorem}{Theorem}

\newtheorem{corollary}[theorem]{Corollary}

\newtheorem{definition}[theorem]{Definition}
\newtheorem{example}[theorem]{Example}

\newtheorem{lemma}[theorem]{Lemma}

\newtheorem{proposition}[theorem]{Proposition}
\newtheorem{remark}[theorem]{Remark}


\numberwithin{equation}{section}
\numberwithin{theorem}{section}

\setlength{\parskip}{1ex plus 0.5ex minus 0.2ex}
\geometry{left=1.0in,right=1.0in,top=0.5in,bottom=1.0in}

\begin{document}
	
	\title{Optimal Insurance to Minimize the Probability of Ruin: \break Inverse Survival Function Formulation}
	
	\author{Bahman Angoshtari
		\thanks{
			1365 Memorial Drive, Ungar 523, Department of Mathematics, University of Miami,  Coral Gables, FL 33146, USA. Email: \texttt{b.angoshtari@math.miami.edu}.}
		\and
		Virginia R. Young
		\thanks{
			530 Church Street, Department of Mathematics, University of Michigan, Ann Arbor, MI 48109, USA. Email: \texttt{vryoung@umich.edu}.}}
	\maketitle
	
	\begin{abstract}
		We find the optimal indemnity to minimize the probability of ruin when premium is calculated according to the distortion premium principle with a proportional risk load, and admissible indemnities are such that both the indemnity and retention are non-decreasing functions of the underlying loss.  We reformulate the problem with the inverse survival function as the control variable and show that deductible insurance with maximum limit is optimal.  Our main contribution is in solving this problem via the inverse survival function.
		
		\medskip
		
		\noindent \emph{Keywords}: Optimal insurance, probability of ruin, distortion premium principle, deductible insurance, maximum limit.
		
		\medskip
		
		\noindent \emph{JEL Codes}: C61, G22.
		
	\end{abstract}

	\section{Introduction}
	
	The problem of computing optimal insurance has a long history, dating back to optimal risk exchanges studied by Borch \cite{B1960, B1960b} and optimal insurance design studied by Arrow \cite{A1963}.  In this introduction, we describe the work in papers related to the problem we solve, namely, minimizing the probability of ruin when premium is calculated according to the distortion premium principle.  Throughout this introduction, $X$ denotes the insurance loss, a non-negative random variable, $I(X)$ denotes the indemnity purchased by a buyer of insurance, and $R(X) = X - I(X)$ denotes the loss retained by the buyer.
	
	Gajek and Zagrodny \cite{GZ2004} assume that $0 \le I(x) \le x$ for all $x \ge 0$ and assume that the insurance premium is a non-decreasing function of $\E I(X)$.  Then, if the buyer of insurance wants full protection against ruin, then deductible insurance is the cheapest, which ties in nicely with Arrow's Theorem on the optimality of deductible insurance.  Gajek and Zagrodny \cite{GZ2004} also show that, if the buyer allows for ruin by limiting the amount spent on insurance, then {\it truncated} deductible insurance is optimal, optimal in the sense of minimizing the probability of ruin.  By {\it truncated deductible insurance}, we mean insurance of the form $I(x) = (x - d)_+$ when $x < m$ and $I(x) = 0$ when $x \ge m$, for some $m > d$.

	Kaluszka \cite{K2005} computes the insurance premium via an economic principle (using a price density), a generalized zero-utility principle, an Esscher principle, or a mean-variance principle.  Optimal insurance to minimize the probability of ruin is truncated deductible insurance for each of these premium principles, and he proves this result directly.  He assumes that the premium is fixed so does not find the optimal amount to spend on insurance.  Kaluszka \cite{K2005} also considers optimal insurance to minimize the probability of ruin if the insurance company invests in a risky asset during the period. He assumes that the insurance premium is computed via the expected-value principle with a positive safety loading; again, the amount spent on insurance is given {\it a priori}.  Truncated deductible insurance is optimal, and Kaluska also finds the optimal amount to invest in the risky asset.  Finally, Kaluszka \cite{K2005} considers a variety of other models: (1) He minimizes the variance of the insurance company's retained risk subject to a VaR constraint (equivalent to computing the premium via a percentile premium).  Optimal insurance is a modified deductible policy with a maximum limit (modified DIML policy).  (2) He reconsiders Arrow's problem with the addition of a VaR constraint.  Optimal insurance is a modified DIML policy.  (3) He maximizes the expected dividend payment when the insurance premium is computed according to the expected-value principle.  Optimal insurance is truncated full coverage.  (4) He maximizes a Kahneman-Tversky value function subject to a premium constraint based on the expected-value principle.  Optimal insurance is truncated deductible insurance.
	
	Cai et al.\ \cite{CTWZ2008} compute the insurance premium via the expected-value principle with a positive safety loading $\theta > 0$.  The class of indemnity functions are is the collection of non-decreasing, convex functions $I$ such that $0 \le I(x) \le x$.  They show that proportional deductible insurance minimizes both VaR and CTE, that is, $I^*(x) = c (x - d)_+$, which includes the proportion $c = 1$ and deductible $d = 0$.  Cheung \cite{C2010} revisits the problem considered by Cai et al.\ \cite{CTWZ2008} and uses a geometric approach to solve the minimization problem, which results in a simpler proof of the optimality of proportional deductible insurance.  Also, Chi and Tan \cite{ChiTan2011} simplify and extend the work of Cai et al.\ \cite{CTWZ2008}.  Specifically, they show that to minimize VaR of the ceding company when insurance is computed according to the expected-value principle, (1) proportional deductible insurance is optimal when $I(x)$ is restricted to be increasing and convex, (2) DIML is optimal when both $I(x)$ and $R(x) = x - I(x)$ are restricted to be increasing, (3) truncated deductible insurance is optimal when $R(x)$ is restricted to be increasing and left continuous.  By contrast, Chi and Tan \cite{ChiTan2011} show that to minimize CVaR, deductible insurance is optimal for all three cases.
	
	Cui et al.\ \cite{CuiYangWu2013} compute the insurance premium via a distorted premium principle with distortion function not necessarily equal to 1 at 1, and they restrict the class of indemnities such that both the indemnity and retention functions are non-negative and non-decreasing with the underlying loss.  Optimal insurance is a layered policy with the various layers depending on how the two distortion functions cross each other.  Assa \cite{Assa2015} extends the work of Cui et al.\ \cite{CuiYangWu2013} to consider the risk measures of both the buyer and seller of insurance.  He assumes that indemnity and retention functions are non-negative and non-decreasing, which implies that any such indemnity function $I(x)$ is absolutely continuous with respect to Lebesgue measure and, thus, is differentiable almost everywhere.  Hence, there exists a Lebesgue integrable function $h$ such that $I(x) = \int_0^x h(t) \, dt$.  This function $h$ is called the {\it marginal indemnification function} (MIF), and Assa \cite{Assa2015} uses the MIF to obtain the optimal indemnity.  Similarly, Zhuang et al.\ \cite{ZhuangWengTanAssa2016} use the MIF to extend the work of Cui et al.\ \cite{CuiYangWu2013} by further restricting $h$ so that $h_0 \le h(x) \le h_1$ for some constants $0 \le h_0 < h_1 \le 1$.

	Weng and Zhuang \cite{WengZhuang2017} minimize the probability of ruin when insurance is computed according to a (concave) distortion premium principle with safety loading $\theta > 0$ for a given amount spent on insurance $\pi = \Pi_g(I(X))$.  They assume that the retention $R(x)$ lies between $0$ and $x$ (as do all of these papers) and that $R$ is a non-decreasing function of the underlying loss.  Weng and Zhuang \cite{WengZhuang2017} inspired this paper because they reformulate their minimization problem in terms of the CDF of the retained loss.  We redo their work by assuming that $I$ is also an increasing function of the underlying loss, which is a natural requirement to prevent the moral hazard of underreporting losses.  Instead of using the CDF of $R(X)$, we use the inverse survival function, which we find more natural.  Weng and Zhuang \cite{WengZhuang2017} mistakenly write that $\pi = \Pi_g(X) - \Pi_g(R(X))$, which one cannot assert for their model because $I$ is not necessarily an increasing function of the underlying loss.  That is, as random variables, $I(X)$ and $R(X)$ are not necessarily comonotonic, which is a requirement in order to say that $\Pi_g(I(X)) + \Pi_g(R(X)) = \Pi_g(X)$.

	In this paper, we minimize the probability of ruin when premium is calculated according to the distortion premium principle with a proportional risk load, and we restrict both $I(x)$ and $R(x) = x - I(x)$ to be non-decreasing, thereby extending the work of Chi and Tan \cite{ChiTan2011} who minimize VaR.\footnote{Recall that Chi and Tan \cite{ChiTan2011} compute the insurance premium via the  expected-value premium principle, which is a special case of the distortion premium principles with a proportional risk load.} The remainder of the paper is organized as follows.  In Section \ref{sec:problem}, we present the model set up and state the minimization problem with the retention function as the control variable. Then, in Section \ref{sec:reform}, we reformulate the problem with the inverse survival function as the control variable and show that deductible insurance with maximum limit (DIML), or so-called {\it limited deductible insurance}, is optimal.  Finally, in Section \ref{sec:opt}, we obtain the optimal deductible and maximum limit.

	\section{Model set up and ruin-minimization problem}\label{sec:problem}
	
	Consider an individual with initial capital $w > 0$ who is subject to a loss represented by a non-negative random variable $X$ and who seeks to minimize her probability of ruin by buying insurance.
	
	For a random variable $Y$, let $S_Y$ denote the \emph{survival function} of $Y$, that is,
	\begin{equation}\label{eq:Sy}
		S_Y(x) = \PP(Y>x), \quad x \in \R,
	\end{equation}
	and define the corresponding \emph{inverse survival function}, or \emph{quantile function}, by
	\begin{equation}\label{eq:Sy_inv}
		S^{-1}_Y(p) = \inf\{x \ge 0: S_Y(x) \le p\}, \quad p \in [0,1].
	\end{equation}
	Note that $S^{-1}_Y$ is non-increasing and right-continuous, as is $S_Y$.
	
	Let $M$ denote the essential supremum of $X$, which might equal $\infty$; specifically,
	\begin{equation}\label{eq:M}
		M = \inf\{x\ge0: S_X(x)= 0\} \in \R^+\cup \{\infty\}.
	\end{equation}
	We assume throughout the paper that $S_X$ is continuously differentiable and strictly decreasing on $[0, M]$, and that $S_X(M):=\lim_{x\to M^-} S_X(x) = 0$. Note that $X$'s quantile function satisfies
	\begin{equation}\label{eq:SX_SXinv}
		S_X\big(S_X^{-1}(p)\big) = p,
		\quad\text{and}\quad
		S_X^{-1}\big(S_X(x)\big) = x,
	\end{equation}
	for all $p \in [0, S_X(0)]$ and $x \in [0, M]$.
	
	\begin{remark}
		$X$ can have an atom at $0$, in which case, $S_X(0) := \lim_{x \to 0^+} S_X(x) < 1$. Furthermore, $X$ can be unbounded from above, in which case $M = \infty$. With a slight abuse of notation, the interval $[a, M]$ and the function evaluation $f(M)$ are interpreted as $[a, \infty)$ and $\lim_{x \to \infty} f(x)$, respectively, when $M = \infty$.   \qed
	\end{remark}
	
	The individual buys insurance to mitigate her loss $X$.  We represent the insurance contract by an \emph{indemnity} $I$ such that the insurer is responsible for the amount $I(x)$ if the policyholder's loss $X = x$.  We also define the \emph{retention} $R$ by $R(x) = x - I(x)$; thus, when the loss $X = x$, $R(x)$ is the amount of loss for which the policyholder is responsible.  We impose the following conditions on the policyholder's retention (and indemnity).
	
	\begin{definition}\label{def:IndemnityFn}
		A function $R:[0,M] \to [0,M]$ is an {\rm admissible retention} if
		\begin{enumerate}
			\item[$(i)$] $0 \le R(x) \le x$ for all $x \in [0, M];$
			
			\item[$(ii)$] $R$ is non-decreasing on $[0, M];$ and,
			
			\item[$(iii)$] $I(x) = x - R(x)$, $x \in [0, M]$, is also non-decreasing.
		\end{enumerate}
		We denote the set of all admissible retentions by $\mathcal{A}$.  \qed
	\end{definition}

	Condition (i) states that neither the policyholder nor the insurer will be responsible for more than the total loss $X$.  Indeed, if $R(x) < 0$ for some $x \ge 0$, then the individual would be motivated to create or report loss $X = x$ because she would gain $I(x) - x > 0$ from the loss.  No insurance company would issue a policy that allowed a policyholder to gain from an insurable loss.  Also, if $R(x) > x$ for some $x \ge 0$, then the individual would not buy such a policy because, in addition to suffering a loss $X = x$, the individual would have to pay the insurance company $R(x) - x > 0$.
	
	Conditions (ii) and (iii) reduce the risk of moral hazard.  Indeed, if the retention were decreasing, then the policyholder would have an incentive to create more loss in order to reduce her retention.  Similarly, if the indemnity were decreasing, then the policyholder would have an incentive to report a smaller loss to the insurer.  Note that a retention $R$ belongs to $\mathcal{A}$ if and only if the indemnity $I$ also belongs to $\mathcal{A}$.
	
	For any $R \in \mathcal{A}$, we introduce the following notation. With a slight abuse of notation, we define $S_R:= S_{R(X)}$, that is, $S_R$ is the survival function of $R(X)$. We also denote the quantile function of $R(X)$ by $S^{-1}_R$.  Similarly, we write $S_I$ for $S_{I(X)}$ and $S^{-1}_I$ for the corresponding quantile function.
	
	Following \cite{CuiYangWu2013}, \cite{Assa2015}, \cite{ZhuangWengTanAssa2016}, and \cite{WengZhuang2017}, we assume that the insurance premium is given by
	\begin{equation}\label{eq:InsPrem}
		\pi_I = (1+\theta) \int_0^M g \big(S_I(x)\big) \, dx,
	\end{equation}
	in which $\theta \ge 0$ is the \emph{safety loading} and $g:[0,1] \to [0,1]$ is a strictly increasing \emph{distortion function} such that $g(0) = 0$ and $g(1) = 1$.\footnote{We assume that $g$ is \emph{strictly} increasing because we want to be able to compute its functional inverse.}  By Fubini's theorem, for any $\pi_I < \infty$,
	\begin{equation}\label{eq:InsPrem2}
		\begin{split}
			\frac{\pi_I}{1+ \theta} = \int_0^M \int_0^{S_I(x)} dg(p) \, dx = \int_0^{S_I(0)} \int_0^{S_I^{-1}(p)} dx \, dg(p) = \int_0^{S_X(0)} S_I^{-1}(p) \, dg(p).
		\end{split}
	\end{equation}
	Furthermore, for any $R \in \mathcal{A}$, $R(X)$ and $I(X) = X - R(X)$ are comonotonic because they are both non-decreasing functions of $X$. Thus, by the comonotonic additivity property of the distorted premium principle,
	\begin{equation}\label{eq:ComonotonicAdd}
		\pi_I = \pi_X - (1+\theta) \int_0^{S_X(0)} S_R^{-1}(p) \, dg(p),
	\end{equation}
	in which $\pi_X$, the price of insuring the total loss, is given by
	\begin{equation}\label{eq:piX}
		\pi_X = (1+\theta) \int_0^{S_X(0)} S_X^{-1}(p) \, dg(p),
	\end{equation}
	which we assume is finite.\footnote{\cite{WengZhuang2017} err in assuming $\pi_I + \pi_R = \pi_X$ because, in their paper, $I$ and $R$ are not necessarily comonotonic.  Only $R$ is required to be non-decreasing with respect to $X$, so $I = X - R$ might decrease as $X$ increases.  In fact, for the optimal insurance, namely, truncated deductible insurance, $I(x) = (x - d)_+ \, {\bf 1}_{\{x \le m\}}$, decreases to $0$ when $x > m$.  Also, if $0 < m < M$ in the expression for $I$, then $\pi_I + \pi_R \ne \pi_X$.}

	\begin{remark}
		Note that if $w \ge \pi_X$, in which $w > 0$ is the individual's initial capital, then the policyholder can avoid ruin with probability one by choosing the indemnity $I(x) = x$.  More generally, if
		\begin{equation}\label{eq:ws0}
			w \ge d + (1 + \theta) \int_d^M g \big(S_X(x) \big) \, dx,
		\end{equation}
		for some $d \ge 0$, then the individual can avoid ruin with probability one by purchasing deductible insurance with $I(x) = (x - d)_+$.  The least amount of money required for this the so-called {\rm safe level}, denoted by $w_s$ $($subscript {\rm s} for {\bf s}afe$)$, and is obtained by minimizing the right side of inequality \eqref{eq:ws0} with respect to $d \ge 0$.  It is straightforward to show that $w_s$ equals
		\begin{equation}\label{eq:ws}
			w_s = d_s + (1 + \theta) \int_{d_s}^M g \big(S_X(x) \big) \, dx,
		\end{equation}
		in which $d_s$ is given by
		\begin{equation}\label{eq:ds}
			d_s =
			\begin{cases}
				0, &\quad \text{if  } \theta \le \theta_s, \\
				S^{-1}_X \left( g^{-1} \left( \frac{1}{1 + \theta} \right) \right), &\quad \text{otherwise},
			\end{cases}
		\end{equation}
		and $\theta_s$ is given by
		\begin{equation}\label{eq:thetas}
			\theta_s = \frac{1}{g \left( S_X(0) \right)} - 1. 
		\end{equation}
		Henceforth, assume that $w < w_s$.   \qed
	\end{remark}

	We now formulate the optimal insurance problem from the perspective of the policyholder, who seeks to minimize her probability of ruin. For a given retention $R$, the net capital of the policyholder after paying the insurance premium and experiencing the loss retained loss is $w - \pi_I - R(X)$. Thus, the policyholder's probability of ruin equals
	\begin{equation}\label{eq:RuinProb}
		\PP \big(w -  \pi_I - R(X) < 0 \big) = \PP \big(R(X) > w -\pi_I\big) = S_R(w - \pi_I).
	\end{equation}
	It then follows from \eqref{eq:ComonotonicAdd} that the policyholder faces the following functional optimization problem,
	\begin{equation}\label{eq:OptIns}
		\inf_{R \in \mathcal{A}} S_R \Big(w - \pi_X + (1+\theta) \int_0^{S_X(0)} S_R^{-1}(p) \, dg(p) \Big),
	\end{equation}
	for $w < w_s$.
	
	\begin{remark}
		There is a duality between the ruin-minimization problem we consider and minimizing value-at-risk, or $VaR_\alpha$, given by
		\begin{equation}
			VaR_\alpha (R(X) + \pi_I) = \inf \{w \ge 0: \PP(R(X) + \pi_I > w) \le \alpha \}.
		\end{equation}
		Indeed, if the minimum probability of ruin in \eqref{eq:OptIns} equals $\alpha^*$ with minimizer $R^*$, then $w$ is the minimum value of $VaR_{\alpha^*}$ with minimizer $R^*$, and vice versa.  Thus, we expect DIML, or limited deductible insurance, to minimize the probability of ruin for $R \in \mathcal{A}$, as in Theorem $3.2$ of {\rm\cite{ChiTan2011}}.  Although our distortion premium principle is more general than their expected-value premium principle, our result should coincide with theirs qualitatively.   \qed
	\end{remark}

	\section{Reformulating \eqref{eq:OptIns} via the retention quantile function}\label{sec:reform}
	
	In problem \eqref{eq:OptIns}, the retention $R$ appears indirectly through the survival function $S_R$ and its right-continuous inverse $S^{-1}_R$.  It will be useful to reformulate \eqref{eq:OptIns} with $S_R^{-1}$ as the control variable. To this end, we introduce the set of admissible retention quantile functions. 
	
	\begin{definition}\label{def:quantilFn}
		A function $S_R^{-1}: [0, 1] \to [0, M]$ is called an {\rm admissible retention quantile function} if
		\begin{enumerate}
			\item[$(i)$] $S_R^{-1}$ is non-increasing and right-continuous;
			
			\item[$(ii)$] $S_R^{-1} \le S_X^{-1};$ and,
			
			\item[$(iii)$] $S_I^{-1} = S_X^{-1} - S_R^{-1}$ is also non-increasing and right-continuous.
		\end{enumerate}
		The set of all admissible retention quantile functions is denoted by $\mathcal{S}^{-1}$.  \qed
	\end{definition}

	\begin{remark}
		Any admissible retention quantile function $S_R^{-1} \in \mathcal{S}^{-1}$ must be {\rm continuous} on $[0, S_X(0)]$, not just {\rm right-continuous}. Continuity follows from $(iii)$ because any $($downward$)$ jump in the closed interval would make the indemnity quantile function $S_I^{-1} = S_X^{-1} - S_R^{-1}$ have an upward jump. This continuity is a reason to work with quantile functions instead of survival functions $($which potentially have jumps$)$.   \qed
	\end{remark}

	\begin{remark}\label{rem:IRinv_equiv}
		Note that $S_R^{-1}$ belongs to $\mathcal{S}^{-1}$ if and only if $S_I^{-1}$ belongs to $\mathcal{S}^{-1}$.   \qed
	\end{remark}

	Next, we establish a one-to-one correspondence between admissible retention functions in $\mathcal{A}$ and admissible retention quantile functions in $\mathcal{S}^{-1}$.
	
	\begin{lemma}\label{lem:R_to_Sr}
		For any $R \in \mathcal{A}$, $S_R^{-1}$ given by \eqref{eq:Sy_inv} belongs to $\mathcal{S}^{-1}$. Conversely, for any $S_R^{-1} \in \mathcal{S}^{-1}$, define $R:[0, M] \to [0, M]$ by
		\begin{equation}\label{eq:R_given_Sr}
			R(x) := S^{-1}_R \big( S_X(x) \big), 
		\end{equation}
		Then, $R \in\mathcal{A}$, and $S^{-1}_R$ is the inverse survival function of $R$.
	\end{lemma}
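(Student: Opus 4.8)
My plan is to make the correspondence explicit: to each $R\in\mathcal A$ associate $S_R^{-1}=R\circ S_X^{-1}$, and to each $S_R^{-1}\in\mathcal S^{-1}$ associate $R=S_R^{-1}\circ S_X$ as in \eqref{eq:R_given_Sr}; I then verify that each map takes values in the other set, that the second construction really does produce the inverse survival function of $R(X)$, and that the two maps are mutually inverse, all via the identities \eqref{eq:SX_SXinv}. Two preliminary facts carry the argument. First, every $R\in\mathcal A$ is $1$-Lipschitz, hence continuous: for $x_1<x_2$, combining ``$R$ non-decreasing'' with ``$I=x-R$ non-decreasing'' yields $0\le R(x_2)-R(x_1)\le x_2-x_1$, and likewise for $I$. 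Second, the standing assumptions make $S_X$ a continuous, strictly decreasing bijection for which \eqref{eq:SX_SXinv} holds, and $S_X^{-1}(p)=0$ for $p\in(S_X(0),1]$ because $S_X(0)\le p$.

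The key step, which I would establish first, is the identity: \emph{for every $R\in\mathcal A$, the inverse survival function of the random variable $R(X)$ is $R\circ S_X^{-1}$.} Fix $p\in[0,S_X(0)]$ and set $x=S_X^{-1}(p)$, so $S_X(x)=p$. Since $R$ is non-decreasing, $\{R(X)>R(x)\}\subseteq\{X>x\}$, hence $S_R(R(x))\le S_X(x)=p$, so $S_R^{-1}(p)\le R(x)$. Conversely, for any $y\in[0,R(x))$ let $x_1=\sup\{x'\in[0,M]:R(x')\le y\}$; continuity of $R$ forces $x_1<x$ (else $R(x)\le R(x_1)\le y$), and $R(x')>y$ for $x'>x_1$ gives $\{X>x_1\}\subseteq\{R(X)>y\}$, so $S_R(y)\ge S_X(x_1)>S_X(x)=p$; thus no $y<R(x)$ is admissible in the infimum defining $S_R^{-1}(p)$, so $S_R^{-1}(p)\ge R(x)$. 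For $p\in(S_X(0),1]$ both sides are $0$: $S_X^{-1}(p)=0$ with $R(0)=0$, while $\PP(R(X)>0)\le\PP(X>0)=S_X(0)\le p$ forces $S_R^{-1}(p)=0$ since $S_R$ is non-increasing.

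With this identity in hand, the forward direction follows. The function $S_R^{-1}=R\circ S_X^{-1}$ satisfies Definition \ref{def:quantilFn}$(i)$ because every inverse survival function is non-increasing and right-continuous; it obeys $S_R^{-1}(p)=R(S_X^{-1}(p))\le S_X^{-1}(p)$ by Definition \ref{def:IndemnityFn}$(i)$ (part $(ii)$); and $S_X^{-1}-S_R^{-1}=(\mathrm{id}-R)\circ S_X^{-1}=I\circ S_X^{-1}$, which by the identity applied to $I\in\mathcal A$ is the inverse survival function $S_I^{-1}$ of $I(X)$, hence non-increasing and right-continuous (part $(iii)$). For the converse, given $S_R^{-1}\in\mathcal S^{-1}$ put $R=S_R^{-1}\circ S_X$: then $0\le R(x)=S_R^{-1}(S_X(x))\le S_X^{-1}(S_X(x))=x$ by $0\le S_R^{-1}\le S_X^{-1}$ and \eqref{eq:SX_SXinv}; $R$ is non-decreasing as a composition of the non-increasing $S_R^{-1}$ and $S_X$; and $I=\mathrm{id}-R=(S_X^{-1}-S_R^{-1})\circ S_X=S_I^{-1}\circ S_X$ is non-decreasing for the same reason; so $R\in\mathcal A$. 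By the identity, the inverse survival function of $R(X)$ is $R\circ S_X^{-1}=S_R^{-1}\circ S_X\circ S_X^{-1}$, which equals $S_R^{-1}(p)$ for $p\in[0,S_X(0)]$ by \eqref{eq:SX_SXinv} and equals $0=S_R^{-1}(p)$ on $(S_X(0),1]$ (since $S_R^{-1}\le S_X^{-1}$ and right-continuity force $S_R^{-1}\equiv0$ there), so it is $S_R^{-1}$. Mutual inverseness is then immediate: $(R\circ S_X^{-1})\circ S_X=R$ and $(S_R^{-1}\circ S_X)\circ S_X^{-1}=S_R^{-1}$ on the relevant ranges by \eqref{eq:SX_SXinv}.

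I expect the real obstacle to be the reverse inequality $S_R^{-1}(p)\ge R(x)$ in the key identity, which is the only place continuity of admissible retentions is genuinely used — so part of the work is to extract that continuity from conditions $(ii)$ and $(iii)$, as in the first paragraph. The other delicate point is the bookkeeping on the degenerate range $(S_X(0),1]$, where $S_X^{-1}\equiv0$ and $X$ may carry an atom at $0$; one checks separately that both $S_R^{-1}$ and the reconstructed $R$ vanish there. (Part $(iii)$ of the forward direction could alternatively be read off from comonotonic additivity of the inverse survival function applied to the comonotonic pair $(R(X),I(X))$ with sum $X$, but the route above stays elementary.)
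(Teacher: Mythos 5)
Your proof is correct, and it takes a more self-contained route than the paper's. The paper's forward direction never computes $S_R^{-1}$ explicitly: it notes that $R(X)$ and $I(X)$ are comonotonic (both being non-decreasing functions of $X$) and invokes comonotonic additivity of quantile functions, $S_I^{-1}+S_R^{-1}=S_X^{-1}$, to obtain condition $(iii)$ of Definition \ref{def:quantilFn}; for the converse it uses the quantile-transform fact $S_X(X)\sim U$, so that $R(X)\sim S_R^{-1}(U)$, and reads off that the inverse survival function of $S_R^{-1}(U)$ is $S_R^{-1}$ itself. You instead prove from scratch the single identity $S_{R(X)}^{-1}=R\circ S_X^{-1}$ for every $R\in\mathcal{A}$ --- whose reverse inequality genuinely needs the $1$-Lipschitz continuity you extract from conditions $(ii)$--$(iii)$ of Definition \ref{def:IndemnityFn} together with the strict monotonicity of $S_X$ --- and then both directions of the lemma, as well as the additivity $S_I^{-1}+S_R^{-1}=S_X^{-1}$ that the paper cites as a known property, follow from this identity and \eqref{eq:SX_SXinv}. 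Your route buys transparency: it shows exactly where continuity of admissible retentions and the regularity of $S_X$ enter (the identity can fail for a retention with a jump), and it handles the degenerate range $(S_X(0),1]$ and a possible atom of $X$ at $0$ explicitly, points the paper leaves implicit; the paper's route buys brevity by outsourcing the two distributional facts to standard results on comonotonic risks and the quantile transform. Your closing remark on mutual inverseness of the two maps is extra beyond what the lemma asserts, but it is correct and harmless.
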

	
	\begin{proof}
		To show the first statement, we check that $S^{-1}_R$ satisfies conditions $(i)$-$(iii)$ of Definition \ref{def:quantilFn} for a given $R \in \mathcal{A}$. $S^{-1}_R$ is non-increasing and right-continuous by its definition in \eqref{eq:Sy_inv}. Also, $S_R(x) \ge S_X(x)$, for all $x \in [0,M]$, since $R(X) \le X$. Thus, by \eqref{eq:Sy_inv}, $S^{-1}_R \le S^{-1}_X$ on $[0, 1]$. It only remains to check condition $(iii)$. From condition $(iii)$ in Definition \ref{def:IndemnityFn}, we know that $I(x) := x - R(x)$ is non-decreasing with respect to $x$; thus, $R(X)$ and $I(X)$ are comonotonic, which implies that $S^{-1}_I + S^{-1}_R = S^{-1}_X$.  It follows that, if we define $S^{-1}_I$ by $S^{-1}_X - S^{-1}_R$, then $S^{-1}_I$ equals the inverse survival function of $I = X - R$ and is, hence, non-decreasing and right-continuous.
		
		To show the second statement, we check that $R$ satisfies conditions $(i)$-$(iii)$ of Definition \ref{def:IndemnityFn} for a given $S_R^{-1} \in \mathcal{S}^{-1}$. Note that $R$ defined by \eqref{eq:R_given_Sr} is non-decreasing because both $S^{-1}_R$ and $S_X$ are non-increasing.  Next, because $S^{-1}_R \le S^{-1}_X$ and because $S^{-1}_X$ maps $[0, S_X(0)]$ continuously onto $[0, M]$, it follows that $0 \le R(x) \le x$, for all $0 \le x \le M$.  Also, $R(X) = S^{-1}_R(S_X(X)) \sim S^{-1}_R(U)$, because $S_X(X) \sim U$, in which $U$ is uniformly distributed on $[0, 1]$.  Thus, the inverse survival function of $R$ is given by
		\begin{equation}
			\inf \left\{ x \ge 0: \PP \left(S^{-1}_R(U) > x \right) \le p \right\} = S^{-1}_R(p).
		\end{equation}

		It only remains to show that $R$ satisfies condition (iii) of Definition \ref{def:IndemnityFn}.  To that end, define $S_I^{-1}(p) := S_X^{-1}(p) - S_R^{-1}(p)$, $0 \le p \le 1$, and observe that $S_I^{-1} \in \mathcal{S}^{-1}$ by Remark \ref{rem:IRinv_equiv}.  By the above argument, $S^{-1}_I$ is the inverse survival function of $I(X)$, in which $I$ is defined by $I(x) := S^{-1}_I \big( S_X(x) \big)$.
		
		Finally, to show condition (iii) of Definition \ref{def:IndemnityFn}, it suffices to show that $I(X)+R(X)$ has the same distribution as $X$. The latter statement can be easily shown, since the comonotonicity of $I(X)$ and $R(X)$ implies that
		\[
		S^{-1}_{I(X)+R(X)}(p) = S^{-1}_I(p) + S^{-1}_R(p) = S^{-1}_X(p), \quad 0 \le p \le 1. \qedhere
		\]
	\end{proof}

	Let us illustrate the relationship between retentions and retention quantile functions by the following example. 
	
	\begin{example}\label{ex:DIML}
		Consider the retention
		\begin{equation}\label{eq:DIML_R}
			\tilde{R}(x) = \tilde{R}(x; d, m) =
			\begin{cases}
				x, &\quad 0 \le x \le d,\\
				d, &\quad d < x \le m,\\
				x - (m - d), &\quad x > m, 			
			\end{cases}	
		\end{equation}
		in which $0\le d \le m\le M$. Note that the corresponding indemnity, $\tilde{I}(x)=x-\tilde{R}(x)$, is \emph{deductible insurance with maximum limit (DIML)}, or {\rm limited deductible insurance}, given by
		\begin{equation}\label{eq:DIML_I}
			\tilde{I}(x) = \tilde{I}(x; d, m) = 
			\begin{cases}
				0, &\quad 0 \le x \le d,\\
				x - d, &\quad d < x \le m,\\
				m - d, &\quad x > m, 			
			\end{cases}
		\end{equation}
		in which $d$ is the {\rm deductible}, and $m$ is the {\rm maximum limit}.
		
		Let $p_i = S_X(i)$, for $i \in \{m, d\}$. Then, the quantile function of $\tilde{R}(X)$ is
		\begin{equation}\label{eq:DIML_Srinv}
			\tilde{S}_R^{-1}(p) = \tilde{S}_R^{-1}(p; p_m, p_d) = 
			\begin{cases}
				S^{-1}_X(p) - (m - d), &\quad 0 \le p < p_m,\\
				d, &\quad p_m \le p < p_d,\\
				S^{-1}_X(p), &\quad p_d \le p \le S_X(0),
			\end{cases}
		\end{equation} 
		which clearly belongs to $\mathcal{S}^{-1}$.
		
		Conversely, take arbitrary constants $0 \le p_m \le p_d\le S_X(0)$, set $i = S_X^{-1}(p_i)$, for $i \in \{m,d\}$, and define $\tilde{S}_R^{-1}\in\mathcal{S}^{-1}$ via \eqref{eq:DIML_Srinv}. A straightforward calculation shows that the retention given by \eqref{eq:R_given_Sr}, namely,
		\[
		\tilde{R}(x) := \tilde{S}^{-1}_R \big( S_X(x) \big), \quad x \in[0, M],
		\]
		coincides with \eqref{eq:DIML_R}. Thus, $\tilde{R}\in\mathcal{A}$ and $\tilde{S}^{-1}_R$ is the inverse survival function of $\tilde{R}(X)$.\qed
	\end{example}
	
	A direct consequence of Lemma \ref{lem:R_to_Sr} is that Problem \eqref{eq:OptIns} is equivalent to the following problem
	\begin{equation}\label{eq:OptIns2}
		\inf_{S_R^{-1}\in\mathcal{S}^{-1}} S_R\Big(w - \pi_X + (1+\theta) \int_0^{S_X(0)} S_R^{-1}(p) \, dg(p) \Big),
	\end{equation}
	in which, for each $S_R^{-1}\in\mathcal{S}^{-1}$, $S_R$ is its right-continuous inverse given by $S_R(x) = \inf\{p:S_R^{-1}(p)\le x\}$, $x\in[0,M]$. We state this result as the following proposition, whose proof directly follows from Lemma \ref{lem:R_to_Sr}, so we omit it.

	\begin{proposition}
		Problems \eqref{eq:OptIns} and \eqref{eq:OptIns2} are equivalent. Specifically, if $S^{-1,*}_R \in \mathcal{S}^{-1}$ optimizes problem \eqref{eq:OptIns2}, then an optimal solution of \eqref{eq:OptIns} is given by
		\begin{equation}\label{eq:r*_given_Srinv*}
			R^*(x) := S^{-1,*}_R \big( S_X(x) \big), \quad x\in[0,M].
		\end{equation}
		Conversely, if $R^*\in\mathcal{A}$ solves problem \eqref{eq:OptIns}, then $S^{-1,*}_R$, the quantile function of $R^*(X)$, belongs to $\mathcal{S}^{-1}$ and solves problem \eqref{eq:OptIns2}.  \qed
	\end{proposition}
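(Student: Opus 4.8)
The plan is to lean entirely on Lemma~\ref{lem:R_to_Sr}, which already delivers the bijective correspondence between $\mathcal{A}$ and $\mathcal{S}^{-1}$; the only thing left is to check that this correspondence carries the objective functional of \eqref{eq:OptIns} onto that of \eqref{eq:OptIns2} value-by-value, so that the two infima, and their minimizers, must coincide. I would organize this as a short two-directional verification.

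First I would fix $R \in \mathcal{A}$ and let $S_R^{-1}$ be its quantile function, defined by \eqref{eq:Sy_inv}; by Lemma~\ref{lem:R_to_Sr}, $S_R^{-1} \in \mathcal{S}^{-1}$. The premium integral $(1+\theta)\int_0^{S_X(0)} S_R^{-1}(p)\,dg(p)$ sitting inside the argument of $S_R$ in \eqref{eq:OptIns} is \emph{verbatim} the expression appearing in \eqref{eq:OptIns2}, so both objectives are evaluated at the one real number $x_R := w - \pi_X + (1+\theta)\int_0^{S_X(0)} S_R^{-1}(p)\,dg(p)$. It then suffices to observe that the function written $S_R$ in \eqref{eq:OptIns2}, namely the right-continuous inverse $x \mapsto \inf\{p : S_R^{-1}(p) \le x\}$, coincides with the genuine survival function $S_{R(X)}$ used in \eqref{eq:OptIns}. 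This is the standard duality for right-continuous, non-increasing functions: $S_{R(X)}$ is right-continuous and non-increasing, its quantile $S_R^{-1}$ (from \eqref{eq:Sy_inv}) is again right-continuous and non-increasing, and reapplying the $\inf$-operation recovers $S_{R(X)}$ identically; hence the two objective values agree at $x_R$.

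Conversely I would fix $S_R^{-1} \in \mathcal{S}^{-1}$ and define $R(x) := S_R^{-1}(S_X(x))$ as in \eqref{eq:R_given_Sr}. Lemma~\ref{lem:R_to_Sr} gives $R \in \mathcal{A}$ and, crucially, that $S_R^{-1}$ is exactly the quantile function of $R(X)$; so by the same double-inversion identity the right-continuous inverse of $S_R^{-1}$ used in \eqref{eq:OptIns2} is again $S_{R(X)}$, and the two objectives agree at the common point $x_R$. Combining the two directions, $R \mapsto S_R^{-1}$ and $S_R^{-1} \mapsto S_R^{-1}(S_X(\cdot))$ are mutually inverse, value-preserving bijections between the feasible sets of \eqref{eq:OptIns} and \eqref{eq:OptIns2}. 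Therefore $\inf\eqref{eq:OptIns} = \inf\eqref{eq:OptIns2}$, and $R^*$ solves \eqref{eq:OptIns} if and only if the quantile function $S_R^{-1,*}$ of $R^*(X)$ solves \eqref{eq:OptIns2}, the two being related by \eqref{eq:r*_given_Srinv*}; that is precisely the proposition.

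The one genuinely load-bearing point—and hence the main (though minor) obstacle—is the identification in \eqref{eq:OptIns2} of the ad hoc right-continuous inverse $\inf\{p : S_R^{-1}(p) \le x\}$ with the true survival function $S_{R(X)}$, i.e.\ that $(S_Y^{-1})^{-1} = S_Y$ for $Y = R(X)$. I would either isolate this as a one-line lemma or cite it as a well-known fact about generalized inverses, emphasizing that it is exactly where right-continuity of both $S_R$ and $S_R^{-1}$ is used, and noting that the continuity of $S_R^{-1}$ on $[0,S_X(0)]$ recorded in the remark after Definition~\ref{def:quantilFn} makes the two-sided correspondence especially transparent. Everything else—membership in the feasible sets and the exact match of the premium term—is supplied directly by Lemma~\ref{lem:R_to_Sr} and by inspection of \eqref{eq:ComonotonicAdd}–\eqref{eq:OptIns}, which is why the proposition is stated without an independent proof.
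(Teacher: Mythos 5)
Your proposal is correct and follows exactly the route the paper intends: the proposition is stated with its proof omitted precisely because it ``directly follows from Lemma~\ref{lem:R_to_Sr},'' and your argument just makes explicit the value-preservation of that bijection, with the only nontrivial ingredient being the standard Galois-type identity $\inf\{p : S_R^{-1}(p) \le x\} = S_{R(X)}(x)$ for right-continuous non-increasing functions, which you correctly isolate and which does hold. Nothing further is needed.
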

	
	
	The following theorem shows that the optimal insurance policy is a DIML contract, as introduced in Example \ref{ex:DIML}.
	
	\begin{theorem}\label{prop:Optimality_of_DIML}
		If \eqref{eq:OptIns2} has an optimal solution, then there exists an optimal solution of \eqref{eq:OptIns} that is a DIML retention function given by \eqref{eq:DIML_R}, for some $0 \le d \le m \le M$.
	\end{theorem}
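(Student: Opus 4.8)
The plan is to start from an optimal retention quantile function $S_R^{-1,*}\in\mathcal{S}^{-1}$ for \eqref{eq:OptIns2}, with minimal ruin probability $\alpha^*$ and associated wealth level
\[
c^* := w - \pi_X + (1+\theta)\int_0^{S_X(0)} S_R^{-1,*}(p)\,dg(p),
\]
so that, writing $S_R^*$ for the right-continuous inverse of $S_R^{-1,*}$, we have $S_R^*(c^*)=\alpha^*$; and then to exhibit a DIML quantile function of the form \eqref{eq:DIML_Srinv} attaining the same value, after which the theorem follows from the preceding Proposition together with Example \ref{ex:DIML}. If $\alpha^*=1$ every admissible policy is optimal and any DIML works, so I would assume $\alpha^*<1$. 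The first step is to extract from $S_R^*(c^*)=\alpha^*$ exactly what is needed: since $S_R^{-1,*}$ is non-increasing and right-continuous, $S_R^{-1,*}(p)\le c^*$ for every $p\ge\alpha^*$, and in particular $v:=S_R^{-1,*}(\alpha^*)\le c^*$; moreover $0\le v\le S_X^{-1}(\alpha^*)$ by Definition \ref{def:quantilFn}(ii), and one checks $\alpha^*\le S_X(0)$ (otherwise $S_R^{-1,*}$ would vanish on a subinterval of $[\alpha^*,S_X(0)]$, forcing $c^*<0$ and hence $\alpha^*=1$).

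The second, and central, step is the construction together with a pointwise comparison. Set $p_m:=\alpha^*$ and $p_d:=S_X(v)$, so that $0\le p_m\le p_d\le S_X(0)$ (using \eqref{eq:SX_SXinv} and $\alpha^*\le S_X(0)$); with $m:=S_X^{-1}(p_m)=S_X^{-1}(\alpha^*)$ and $d:=S_X^{-1}(p_d)=v$, let $\tilde S_R^{-1}$ be the DIML quantile function \eqref{eq:DIML_Srinv} for the parameters $(p_m,p_d)$. By Example \ref{ex:DIML}, $\tilde S_R^{-1}\in\mathcal{S}^{-1}$, and $\tilde S_R^{-1}(\alpha^*)=d=v$ by construction. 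I would then establish
\[
\tilde S_R^{-1}(p)\ge S_R^{-1,*}(p),\qquad p\in[0,S_X(0)],
\]
piece by piece: on $[p_d,S_X(0)]$ because there $\tilde S_R^{-1}=S_X^{-1}\ge S_R^{-1,*}$ by Definition \ref{def:quantilFn}(ii); on $[\alpha^*,p_d)$ because there $\tilde S_R^{-1}\equiv d=v=S_R^{-1,*}(\alpha^*)\ge S_R^{-1,*}(p)$ by monotonicity; and on $[0,\alpha^*)$ by invoking Definition \ref{def:quantilFn}(iii), that $S_X^{-1}-S_R^{-1,*}$ is non-increasing, which gives $S_X^{-1}(p)-S_R^{-1,*}(p)\ge S_X^{-1}(\alpha^*)-v=m-d$ and hence $S_R^{-1,*}(p)\le S_X^{-1}(p)-(m-d)=\tilde S_R^{-1}(p)$. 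I expect this comparison to be the crux of the argument: it is the one place where the hypothesis that \emph{both} the retention and the indemnity quantile functions are monotone is essential — precisely the feature distinguishing this setting from that of Weng and Zhuang \cite{WengZhuang2017}.

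The final step is routine bookkeeping. Integrating the comparison against the (positive) measure $dg$ and multiplying by $1+\theta$ yields $(1+\theta)\int_0^{S_X(0)}\tilde S_R^{-1}(p)\,dg(p)\ge (1+\theta)\int_0^{S_X(0)}S_R^{-1,*}(p)\,dg(p)$, so the new wealth level $\tilde c := w-\pi_X+(1+\theta)\int_0^{S_X(0)}\tilde S_R^{-1}(p)\,dg(p)$ (finite, since $\tilde S_R^{-1}\le S_X^{-1}$) satisfies $\tilde c\ge c^*$. Because $\tilde S_R^{-1}(\alpha^*)=v\le c^*\le\tilde c$, the right-continuous inverse $\tilde S_R$ of $\tilde S_R^{-1}$ satisfies $\tilde S_R(\tilde c)\le\alpha^*$; since $\alpha^*$ is the optimal value of \eqref{eq:OptIns2}, equality must hold, so $\tilde S_R^{-1}$ is itself optimal for \eqref{eq:OptIns2}. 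By the preceding Proposition, $\tilde R(x):=\tilde S_R^{-1}(S_X(x))$ is an optimal solution of \eqref{eq:OptIns}, and by Example \ref{ex:DIML} it is exactly the DIML retention \eqref{eq:DIML_R} with deductible $d=v$ and maximum limit $m=S_X^{-1}(\alpha^*)$, which satisfy $0\le d\le m\le M$. This completes the plan; the only non-bookkeeping ingredient is the pointwise domination of the second step, which is where the full strength of Definition \ref{def:quantilFn} is used.
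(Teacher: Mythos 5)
Your proposal is correct, and it reaches the conclusion by a route that is close in spirit to the paper's but genuinely different in execution. The paper also starts from an optimizer $S_R^{-1,*}$ of \eqref{eq:OptIns2} and the level $d^*$ in \eqref{eq:aStar}, but it then introduces the auxiliary constrained maximization \eqref{eq:OptIns_aux1} (maximize the premium integral subject to $S_R^{-1}(p_m^*)=d^*$), shows any solution of that auxiliary problem is again optimal for \eqref{eq:OptIns2}, and finally identifies its solution with the DIML quantile \eqref{eq:DIML_Srinv} by a geometric appeal to the upper boundary of the feasible region in Figure \ref{fig:Feasibility}. You instead skip the auxiliary problem: you anchor the DIML quantile at the point $\big(\alpha^*, v\big)$ with $v=S_R^{-1,*}(\alpha^*)$ (so your deductible is $v\le c^*$, whereas the paper's is $d^*=c^*$; both give the same minimal ruin probability $\alpha^*=S_X(m^*)$) and you prove the pointwise domination $\tilde S_R^{-1}\ge S_R^{-1,*}$ directly, using condition $(ii)$ on $[p_d,S_X(0)]$, monotonicity of $S_R^{-1,*}$ on $[\alpha^*,p_d)$, and condition $(iii)$ on $[0,\alpha^*)$. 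What your version buys is rigor exactly where the paper is informal: your three-interval comparison is a precise substitute for the paper's ``it is evident from the figure'' step, and it also avoids the implicit assumption behind \eqref{eq:premineq} that $S_R^{-1,*}$ is feasible for \eqref{eq:OptIns_aux1} (one only knows $S_R^{-1,*}(p_m^*)\le d^*$ in general). Two small cosmetic points: the parenthetical justifying $\alpha^*\le S_X(0)$ garbles the interval (the relevant fact is that $S_R^{-1,*}$ vanishes on $[S_X(0),1]$, so $\alpha^*>S_X(0)$ would force $c^*<0$), and the $\alpha^*=1$ fallback in fact never occurs since no insurance already yields ruin probability $S_X(w)<1$; neither affects correctness.
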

	
	\begin{proof}
		Let $S_{R}^{-1,*}$ be an optimal solution of \eqref{eq:OptIns2}, define $d^*$ by
		\begin{equation}\label{eq:aStar}
			d^*= w - \pi_X + (1 + \theta) \int_0^{S_X(0)} S_{R}^{-1, *}(p) \, dg(p),
		\end{equation}
		and set $p_m^* = S^*_{R}(d^*)$, in which $S^*_{R}$ is the right-continuous inverse of $S^{-1,*}_R$.  Note that $p^*_m$ is the minimum probability of ruin.
		
		Consider the following auxiliary problem:
		\begin{equation}\label{eq:OptIns_aux1}
			\sup_{S_R^{-1}\in\mathcal{S}^{-1}} \left\{ \int_0^{S_X(0)} S_R^{-1}(p) \, dg(p) : S_R^{-1}(p^*_m) = d^* \right\}.
		\end{equation}
		We first show that any optimal solution of \eqref{eq:OptIns_aux1} is also an optimal solution of \eqref{eq:OptIns2}. Let $S_{R}^{-1,A}\in\mathcal{S}^{-1}$ be an optimal solution of \eqref{eq:OptIns_aux1}, and denote its right-continuous inverse by $S^A_R$.  Then, because $S^*_R(d^*) = p^*_m$, it follows that
		\begin{equation}\label{eq:premineq}
			\int_0^{S_X(0)} S_R^{-1, A}(p) \, dg(p) \ge  \int_0^{S_X(0)} S_R^{-1, *}(p) \, dg(p).
		\end{equation}
		Define
		\begin{equation}\label{eq:defdA}
			d^A = w - \pi_X + (1 + \theta)  \int_0^{S_X(0)} S_R^{-1, A}(p) \, dg(p).
		\end{equation}
		Inequality \eqref{eq:premineq} implies $d^A \ge d^*$; thus, $S^A_R(d^A) \le S^A_R(d^*) = p^*_m$.  Because $S^{-1,*}_R$ is an optimal solution of \eqref{eq:OptIns2}, it follows that $S^A_R(d^A) = p^*_m$, so $S^{-1,A}_R$ is also an optimal solution for \eqref{eq:OptIns2}.

		To finish the proof, it only remains to show that the solution of \eqref{eq:OptIns_aux1} is a DIML retention quantile function, as given in \eqref{eq:DIML_Srinv}. Figure \ref{fig:Feasibility} illustrates the feasibility set of problem \eqref{eq:OptIns_aux1}.  In particular, \eqref{eq:OptIns_aux1} looks for a non-increasing and continuous function $S^{-1}_R:[0, S_X(0)] \to [0, M]$ whose graph falls into the shaded region of Figure \ref{fig:Feasibility} such that the integral $\int_0^{S_X(0)} S^{-1}_R(p) \, dg(p)$ is maximized.  Recall that an admissible retention $S^{-1}_R$ is such that $S^{-1}_X - S^{-1}_R$ is non-decreasing.  It is evident that the optimal choice for $S^{-1}_R$ is the upper boundary of the shaded region, which coincides with a DIML retention quantile function \eqref{eq:DIML_Srinv}.
	\end{proof}
	
	\begin{remark}
		From the proof of Theorem {\rm \ref{prop:Optimality_of_DIML}}, we learn that, for the optimal DIML retention, $\tilde{R}^*$, the probability $p_m^* = S_{\tilde R^*}(d^*) = S_X(m^*)$ is the minimum probability of ruin.   \qed
	\end{remark}

	\begin{figure}[t]
		\centerline{
			\adjustbox{trim={0.05\width} {0.0\height} {0.04\width} {0.06\height},clip}{\includegraphics[scale=0.35,page=1]{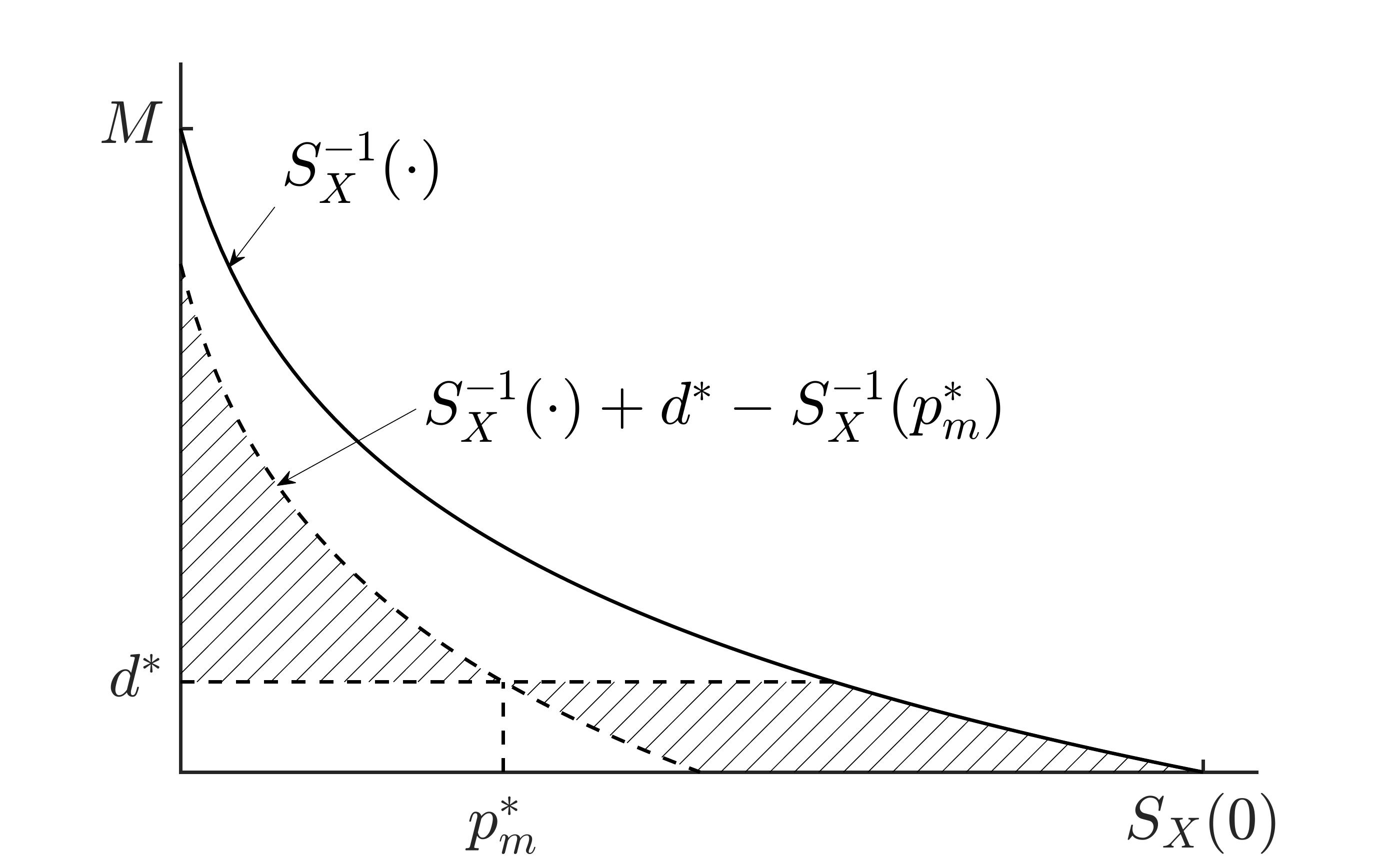}}
		}
		\caption{The feasible region for problem \eqref{eq:OptIns_aux1} includes any non-increasing and continuous function $S_R^{-1}$ whose graph falls into the shaded area such that $S_I^{-1}=S_X^{-1}-S_R^{-1}$ is also decreasing.}
		\label{fig:Feasibility}
	\end{figure}

	\section{Optimal DIML contract}\label{sec:opt}
	
	Theorem \ref{prop:Optimality_of_DIML} implies that if \eqref{eq:OptIns2} has an optimal solution, then it is sufficient to look within the class of DIML retention quantiles.  Thus, our next goal is to find the optimal DIML contract.  By restricting indemnity contracts to those given by \eqref{eq:DIML_I}, problem \eqref{eq:OptIns} becomes a two-dimensional problem:
	\begin{equation}\label{eq:OptIns_DIML}
		\inf_{\substack{0 \le p_m \le p_d \le S_X(0) \\
				p_d \ge S_X(w)}} \tilde{S}_R \Big(w - \pi_X + (1 + \theta) \int_0^{S_X(0)} \tilde{S}_R^{-1}(p; p_m, p_d) \, dg(p); \, p_m, p_d \Big),
	\end{equation}
	in which $\tilde{S}_R^{-1}$ is given by \eqref{eq:DIML_Srinv}, and $\tilde{S}_R$ is its right-continuous inverse.  Recall that $p_i = S_X(i)$ for $i \in \{ m, d \}$.  Note that we can restrict $d$ to be such that $d \le w$, or equivalently $p_d \ge S_X(w)$, because buying no insurance is always better than buying a DIML contract with $d > w$.

	
	The following theorem provides the optimal (DIML) insurance policy.
	
	\begin{theorem}\label{thm:OptimalDIML}
		Assume that initial wealth $w$ is less than the safe level $w_s$, as given by \eqref{eq:ws}. An optimal insurance policy to minimize the probability of ruin is given by a DIML contract, and the optimal deductible $d^*$ and maximum limit $m^*$ for the optimal DIML insurance policy are as follows . First, recall that $d_s$ is given by
		\begin{equation}\label{eq:dstar}
			d_s =
			\begin{cases}
				0, &\text{if  } \theta \le \theta_s = \frac{1}{g \left( S_X(0) \right)} - 1, \\
				S^{-1}_X \left( g^{-1} \left( \frac{1}{1 + \theta} \right) \right), &\text{otherwise}.
			\end{cases}
		\end{equation}
		\begin{enumerate}
			\item[(i)] If $0 \le \theta \le \theta_s$, then $d^* = d_s = 0$, which is independent of initial wealth, and $m^* < M$ uniquely solves
			\begin{equation}\label{eq:mStar_d0}
				w = (1 + \theta) \int_0^m g \big( S_X(t) \big) \, dt.
			\end{equation}
			
			\item[(ii)] If $\theta > \theta_s$ and $w \le d_s$, then it is optimal not to buy insurance, or equivalently, $d^* = m^* = w$.
			
			\item[(iii)] If $\theta > \theta_s$ and $w > d_s$, then $d^* = d_s > 0$, which is independent of initial wealth, and $m^* < M$ uniquely solves
			\begin{equation}\label{eq:mStar_dpos}
				w - d_s = (1 + \theta) \int_{d_s}^m g \big( S_X(t) \big) \, dt.
			\end{equation}
		\end{enumerate}
	\end{theorem}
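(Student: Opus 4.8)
The plan is to use Theorem~\ref{prop:Optimality_of_DIML} to reduce to minimizing over DIML contracts and then solve \eqref{eq:OptIns_DIML} explicitly by tracking the sign of $\phi(t):=1-(1+\theta)g(S_X(t))$. For a DIML contract $\tilde R(\cdot;d,m)$ I would first record two elementary closed forms. From \eqref{eq:InsPrem}, since $S_{\tilde I(X)}(x)=S_X(x+d)$ for $0\le x<m-d$ and $S_{\tilde I(X)}(x)=0$ for $x\ge m-d$, the premium is $\pi_{\tilde I}(d,m)=(1+\theta)\int_d^m g\big(S_X(t)\big)\,dt$. From \eqref{eq:DIML_R} one checks that $\{\tilde R(X)>y\}=\{X>y\}$ for $y<d$ and $\{\tilde R(X)>y\}=\{X>y+m-d\}$ for $y\ge d$, so that
\[
S_{\tilde R}(y)=\begin{cases} S_X(y), & 0\le y<d,\\ S_X(y+m-d), & d\le y\le M.\end{cases}
\]
Hence, by \eqref{eq:RuinProb}, the ruin probability of $\tilde R(\cdot;d,m)$ is $S_{\tilde R}\big(w-\pi_{\tilde I}\big)$, which equals $1$ if $w-\pi_{\tilde I}<0$, equals $S_X(w-\pi_{\tilde I})\ge S_X(w)$ if $0\le w-\pi_{\tilde I}<d$, and equals $S_X\big(w-\pi_{\tilde I}+m-d\big)$ if $w-\pi_{\tilde I}\ge d$. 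Because the no-insurance contract $d=m=0$ lies in the last case and has ruin probability $S_X(w)$, it suffices to minimize over $\mathcal F:=\{(d,m):0\le d\le m\le M,\ d+\pi_{\tilde I}(d,m)\le w\}$; and on $\mathcal F$ we have $w-\pi_{\tilde I}+m-d=w+\int_d^m[1-(1+\theta)g(S_X(t))]\,dt=w+J(d,m)$ with $J(d,m):=\int_d^m\phi(t)\,dt$, so the task becomes to maximize $J$ over $\mathcal F$.

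Next I would collect the structural facts. The function $\phi$ is continuous and strictly increasing on $[0,M]$ (as $S_X$ is continuous and strictly decreasing and $g$ is strictly increasing), $\lim_{t\to M}\phi(t)=1$, and $\phi(0)=(\theta_s-\theta)g(S_X(0))$ with $\theta_s$ as in \eqref{eq:thetas}; hence $\phi>0$ on $(0,M)$ when $\theta\le\theta_s$, while when $\theta>\theta_s$ it has the unique zero $d_s$ of \eqref{eq:dstar} in $(0,M)$, being negative on $[0,d_s)$ and positive on $(d_s,M)$. For $0\le d\le w$ the map $m\mapsto d+\pi_{\tilde I}(d,m)$ is strictly increasing from $d$ to $d+\pi_{\tilde I}(d,M)\ge w_s>w$ — the inequality $d+\pi_{\tilde I}(d,M)\ge w_s$ because, by \eqref{eq:ws0}--\eqref{eq:ws}, $w_s$ is the infimum over $d$ of that quantity — so there is a unique $\hat m(d)\in[d,M)$ with $d+\pi_{\tilde I}(d,\hat m(d))=w$, and $\{m:(d,m)\in\mathcal F\}=[d,\hat m(d)]$. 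Since $(1+\theta)\int_d^{\hat m(d)}g(S_X)=w-d$, a one-line computation gives $J(d,\hat m(d))=\hat m(d)-w$; and since $\partial_m J(d,m)=\phi(m)$, on $[d,\hat m(d)]$ the function $J(d,\cdot)$ is non-increasing then non-decreasing, so $\max_m J(d,m)=\max\{0,\ \hat m(d)-w\}$. Finally, implicit differentiation of $d+\pi_{\tilde I}(d,\hat m(d))=w$ gives $\hat m'(d)=-\phi(d)\big/\big[(1+\theta)g(S_X(\hat m(d)))\big]$, so $\hat m$ increases where $\phi<0$ and decreases where $\phi>0$.

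These facts dispatch the three cases. If $0\le\theta\le\theta_s$, then $\phi\ge0$, so $J(d,\hat m(d))=\hat m(d)-w\ge0$ and $\hat m$ is non-increasing; hence $\max_{\mathcal F}J=\hat m(0)-w$, attained at $d^*=0$ and $m^*=\hat m(0)$, which is precisely \eqref{eq:mStar_d0}, with $m^*<M$ since $w<w_s=(1+\theta)\int_0^M g(S_X)$. If $\theta>\theta_s$ and $w\le d_s$, then $(1+\theta)g(S_X(t))\ge1$ on $[0,d_s]$ forces $m\le d+\pi_{\tilde I}(d,m)\le w$ for any $(d,m)\in\mathcal F$ with $m\le d_s$, and $d+\pi_{\tilde I}(d,m)>d_s\ge w$ whenever $m>d_s$; thus $m\le d_s$ throughout $\mathcal F$, where $\phi\le0$, so $J\le0$ with equality only when $d=m$: no insurance is optimal, i.e., $d^*=m^*=w$. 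If $\theta>\theta_s$ and $w>d_s$, then $\hat m$ is maximized at $d=d_s\in(0,w)$, and, using $(1+\theta)g(S_X(t))<1$ for $t>d_s$, one gets $\hat m(d_s)>w>d_s$; hence $\max_{\mathcal F}J=\hat m(d_s)-w>0$, attained at $d^*=d_s$ and $m^*=\hat m(d_s)$, which is \eqref{eq:mStar_dpos}, with $m^*<M$ since $w<w_s=d_s+(1+\theta)\int_{d_s}^M g(S_X)$. In cases (i) and (iii) the minimal ruin probability is $S_X\big(w+J(d^*,m^*)\big)=S_X(m^*)$, and in case (ii) it is $S_X(w)$. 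To conclude that this DIML contract is optimal within all of $\mathcal A$ — not merely within DIML contracts — I would note that the auxiliary-problem argument in the proof of Theorem~\ref{prop:Optimality_of_DIML} applies verbatim to an arbitrary $R\in\mathcal A$: it produces a DIML retention with ruin probability no larger than that of $R$. Hence the infimum over $\mathcal A$ equals the infimum over DIML contracts, which we have just computed and attained.

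The step I expect to be the main obstacle is the two-dimensional bookkeeping: reducing the search first to $\mathcal F$ (equivalently to $w-\pi_{\tilde I}\ge d$) and then to the curve $m=\hat m(d)$, and then proving that $\hat m(d)$ is maximized exactly at $d=d_s$ (resp.\ $d=0$) with the resulting $m^*$ satisfying $d_s\le m^*<M$ — this is where the definitions of $\theta_s$, $d_s$, and $w_s$, and the single sign change of $\phi$, must all be used in precisely the right way. A secondary care point is the edge behavior (an atom of $X$ at $0$; the case $M=\infty$, noting $w<w_s\le M$ so that $w<M$) and making sure the reduction to DIML contracts does not tacitly presuppose the existence of an optimizer.
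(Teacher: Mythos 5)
Your argument is correct and takes essentially the same route as the paper: your $\phi$ is exactly $\Phi'$ from Lemma \ref{lem:PsiPhiMonot}, your curve $\hat m(d)$ defined by $d+\pi_{\tilde I}(d,m)=w$ is precisely the paper's $m=\Psi^{-1}\big(\Phi(d)-w\big)$, and your objective $J\big(d,\hat m(d)\big)=\hat m(d)-w$ coincides with $\Phi\big(\Psi^{-1}(\Phi(d)-w)\big)-\Phi(d)$, so your case analysis via the single sign change of $\phi$ at $d_s$ reproduces the appendix proof. Your closing observation---running the envelope/auxiliary argument of Theorem \ref{prop:Optimality_of_DIML} for an arbitrary $R\in\mathcal{A}$ rather than an assumed optimizer, so the reduction to DIML contracts does not presuppose existence---is a correct and worthwhile refinement of the paper's presentation.
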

	\begin{proof}
		See Appendix \ref{app:OptimalDIML}.
	\end{proof}
	
	Theorem \ref{thm:OptimalDIML} can be interpreted as follows. If insurance is cheap enough, namely, $\theta \le \theta_s$, then the policyholder, regardless of the amount of her wealth, should spend all of her initial wealth to buy a DIML contract with a zero deductible and a maximum limit less than $M$.  Thus, she ruins if the loss exceeds the maximum limit.
	
	However, if the insurance is expensive (that is, $\theta > \theta_s$), then the optimal policy changes according to how wealthy the policyholder is.  If her wealth is less than $d_s > 0$, then she is better off buying no insurance.  If wealth is between $d_s$ and $w_s$, it is optimal to buy a DIML policy with non-zero deductible and a maximum limit less than $M$. That policy is such that her initial wealth exactly covers the insurance premium with remainder equal to the deductible.  Thus, she ruins if the loss exceeds the maximum limit.
	
	We also have a corollary that follows readily from Theorem \ref{thm:OptimalDIML}.
	
	\begin{corollary}\label{cor:binds}
		For the optimal DIML contract given in Theorem \ref{thm:OptimalDIML}, the deductible plus the premium equals initial wealth, that is, $d^* + (1 + \theta) \int_{d^*}^{m^*} g \big( S_X(t) \big) \, dt = w$.  In other words, the constraint in \eqref{eq:Aux1} binds.  Moreover, the minimum probability of ruin equals $S_X(m^*)$.   \qed
	\end{corollary}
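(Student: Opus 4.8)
The plan is to derive all three assertions directly from the explicit description of the optimal contract in Theorem \ref{thm:OptimalDIML}, treating its three cases (i)--(iii) separately, since in each case the optimal deductible $d^*$ is given in closed form and the optimal maximum limit $m^*$ is either explicit or characterized by a defining equation.

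First I would establish the identity $d^* + (1+\theta)\int_{d^*}^{m^*} g\big(S_X(t)\big)\,dt = w$. In case (i), $d^* = 0$ and $m^*$ is defined by \eqref{eq:mStar_d0}, namely $w = (1+\theta)\int_0^{m^*} g(S_X(t))\,dt$, so the identity is immediate. In case (ii), $d^* = m^* = w$, so the integral is over an empty interval and the identity reduces to $w = w$. In case (iii), $d^* = d_s$ and $m^*$ solves \eqref{eq:mStar_dpos}, $w - d_s = (1+\theta)\int_{d_s}^{m^*} g(S_X(t))\,dt$; adding $d_s$ to both sides gives the identity. This step is pure bookkeeping with the definitions and presents no real difficulty.

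Next I would observe that, for a DIML indemnity $\tilde I(\cdot\,; d, m)$, the survival function of $\tilde I(X)$ is $S_{\tilde I}(t) = S_X(t+d)$ for $0 \le t < m - d$ and $S_{\tilde I}(t) = 0$ for $t \ge m-d$, so by \eqref{eq:InsPrem} and the substitution $u = t+d$ the premium equals $\pi_{\tilde I} = (1+\theta)\int_d^m g(S_X(t))\,dt$. Hence the first identity says precisely that $w - \pi_{\tilde I^*} = d^*$; that is, the post-premium net wealth equals the deductible, which is exactly the assertion that the wealth constraint in the auxiliary problem \eqref{eq:Aux1} binds at the optimum.

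Finally, for the minimum probability of ruin, I would evaluate the survival function of $\tilde R^*(X)$ at the post-premium net wealth $w - \pi_{\tilde I^*} = d^*$: since $\tilde R^*(x) \le d^*$ for $x \le m^*$ and $\tilde R^*(x) = x - (m^* - d^*) > d^*$ for $x > m^*$ (with the degenerate readings when $d^* = 0$ or when $d^* = m^* = w$), ruin occurs exactly on $\{X > m^*\}$, so the probability of ruin is $S_{\tilde R^*}(d^*) = S_X(m^*)$; this also agrees with the Remark following the proof of Theorem \ref{prop:Optimality_of_DIML}, the point being that the explicit $m^*$ of Theorem \ref{thm:OptimalDIML} coincides with the $m^*$ appearing there. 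I do not expect a genuine obstacle; the only place calling for a little care is the degenerate case (ii), where one must check that net capital equal to zero is not ruin, so that ruin still occurs on $\{X > w\} = \{X > m^*\}$ and again gives $S_X(m^*)$.
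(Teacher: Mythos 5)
Your proposal is correct and follows essentially the route the paper intends when it says the corollary ``follows readily'' from Theorem \ref{thm:OptimalDIML}: the binding identity is just the defining equations \eqref{eq:mStar_d0} and \eqref{eq:mStar_dpos} (trivially true in case (ii)), the DIML premium formula you re-derive is the paper's \eqref{eq:Aux2}, and evaluating the retention survival function \eqref{eq:DIML_Sr} at $w-\pi_{I^*}=d^*$ gives $S_X(m^*)$, matching the remark after Theorem \ref{prop:Optimality_of_DIML}. The only cosmetic difference is that you obtain the premium directly from $S_{\tilde I}(t)=S_X(t+d)$ in \eqref{eq:InsPrem} rather than via the comonotonic/quantile computation in the appendix, which changes nothing of substance.
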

	
	Recall that, given any optimal solution of problem \eqref{eq:OptIns2}, the proof of Theorem \ref{prop:Optimality_of_DIML} computes a quantity $d^*$ in \eqref{eq:aStar}, which, in turn, becomes the deductible for a DIML contract with the same minimum probability of ruin.  Suppose we begin with the optimal DIML contract, as given in Theorem \ref{thm:OptimalDIML}, with a deductible $d^*$ and and maximum limit $m^*$, then compute the expression on the right side of \eqref{eq:aStar}.  Because the expression on the right side of \eqref{eq:aStar} equals initial wealth minus the premium, it follows from Corollary \ref{cor:binds} that we obtain the deductible $d^*$ of the original DIML policy.  Corollary \ref{cor:binds} also implies that the computation of $p^*_m$, the minimum probability of ruin, in the proof of \ref{prop:Optimality_of_DIML} leads to the same maximum limit $m^*$ as in the original DIML policy.

	\appendix
	
	\section{Proof of Theorem \ref{thm:OptimalDIML}}\label{app:OptimalDIML}
	
	We introduce the following notation. Define the auxiliary functions $\Psi$ and $\Phi$ by
	\begin{align}
		\Psi(x) &= (1 + \theta) \int_x^M g \big( S_X(t) \big) \,dt, \label{eq:Psi}
		\intertext{and}
		\Phi(x) &= x + \Psi(x), \label{eq:Phi}
	\end{align}
	for $x \in [0, M]$.  Note that $\Psi$ is strictly decreasing; denote its inverse by $\Psi^{-1}$.
	
	\begin{remark}
		It is easy to show that $\Psi(d) - \Psi(m)$ is the premium corresponding to a DIML insurance with deductible $d$ and maximum level $m$; see \eqref{eq:Aux2} below.  This result is well known to those who work with the distortion premium principle, but we include it for completeness.  Also, note that $w_s = \Phi(d_s)$, in which $w_s$ and $d_s$ are given in \eqref{eq:ws} and \eqref{eq:ds}, respectively.
	\end{remark}

	For ease of reference in this proof, we present the survival function $\tilde{S}_R$ for the retention function of DIML insurance.  $\tilde{S}_R$, the right-continuous inverse of $\tilde{S}^{-1}_R$ in \eqref{eq:DIML_Srinv}, is given by
	\begin{equation}\label{eq:DIML_Sr}
		\tilde{S}_R(x) = 
		\begin{cases}
			S_X(x), &\quad 0 \le x < d,\\
			S_X(x + m - d), &\quad d \le x \le M.
		\end{cases}
	\end{equation} 
	
	Without loss of generality, we can assume that the optimal insurance premium $\pi^*_I$, for the optimal indemnity and the optimal deductible $d^*$ satisfy
	\begin{equation}\label{eq:Aux1}
		w - \pi^*_{I} \ge d^*.
	\end{equation}
	Indeed, if $w -\pi^*_I < d^*$, then \eqref{eq:DIML_Sr} yields that the minimum probability of ruin equals
	\[
	S_{\tilde{R}^*(X)}(w - \pi^*_{I}) = \tilde{S}^*_R(w - \pi^*_I) = S_X(w - \pi^*_{I}) \ge S_X(w),
	\]
	in which $\tilde{R}^*(X)$ is the optimal DIML retention, or equivalently, $\tilde{S}^*_R = S_{\tilde{R}^*(X)}$ is the optimal survival function of the retention.  But, this inequality implies that buying no insurance is also optimal, in which case, we can set $d^* = m^* = w$ to recover \eqref{eq:Aux1}. Thus, imposing the inequality
	\begin{equation}\label{eq:nonBinding}
		w - \pi_I \ge d
	\end{equation}
	will not change the optimal value of problem \eqref{eq:OptIns_DIML}. Under this additional constraint, by \eqref{eq:DIML_Sr}, the probability of ruin simplifies to
	\[
	\tilde{S}_R(w - \pi_I) = S_X(w - \pi_I + m - d).
	\]
	Thus, problem \eqref{eq:OptIns_DIML} is equivalent to
	\begin{equation}\label{eq:OptIns_DIML_v2}
		\sup_{0 \le d \le m \le M} \big\{ m - d - \pi_I : d + \pi_I \le w \big\}.
	\end{equation}

	We can simplify this optimization problem further. By \eqref{eq:ComonotonicAdd}, \eqref{eq:piX}, and \eqref{eq:DIML_Srinv}, we obtain
	\[
	\begin{split}
		\frac{\pi_I}{1 + \theta} = \frac{\pi_X - \pi_R}{1 + \theta} &= \int_0^{S_X(0)} S_X^{-1}(p) \, dg(p) - \int_0^{S_X(0)} \tilde{S}_R^{-1}(p;p_m,p_d) \, dg(p)\\
		&= \int_0^{p_m} \big(m - S^{-1}_X(p) \big) \, dg(p) - \int_0^{p_d} \big(d - S_X^{-1}(p) \big) \, dg(p) \\
		&= \int_{p_d}^{p_m} g(p) \, d S_X^{-1}(p) = \int_d^m g \big( S_X(t) \big) \, dt.
	\end{split}
	\]
	Therefore,
	\begin{equation}\label{eq:Aux2}
		\pi_I = \Psi(d) - \Psi(m),
	\end{equation}
	in which $\Psi$ is given by \eqref{eq:Psi}, and \eqref{eq:OptIns_DIML_v2} becomes
	\begin{equation}\label{eq:PhiPsi_Optim}
		\sup_{d, m} \left\{ \Phi(m) - \Phi(d) : 0 \le d \le m \le \Psi^{-1}\big( \Phi(d) - w \big) \right\},
	\end{equation}
	in which $\Phi$ is given by \eqref{eq:Phi}.  Note that $\Phi(d) - w > 0$ for all $d \in [0, M]$ because $w < w_s = \min_{d \in [0, M]} \Phi(d)$.

	We need the following result regarding monotonicity of $\Psi$ and $\Phi$.  Its proof is immediate; thus, we omit it. 
	\begin{lemma}\label{lem:PsiPhiMonot}
		For $x \in (0, M)$,
		\begin{equation}\label{eq:PsiPhi_prime}
			\Psi^\prime(x) = - (1 + \theta) g\big(S_X(x) \big), \quad \text{and} \quad
			\Phi^\prime(x) = 1 -  (1 + \theta) g\big(S_X(x) \big).
		\end{equation}
		In particular, $\Psi$ is strictly decreasing on $[0, M]$, and $\Phi$ is strictly decreasing $($respectively, strictly increasing$)$ on $[0, d_s)$ $($respectively, $(d_s, M])$, in which $d_s$ is given by \eqref{eq:ds}.
	\end{lemma}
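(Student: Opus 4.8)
The plan is to get the two derivative formulas from the fundamental theorem of calculus and then read off the sign of each derivative from the standing hypotheses, namely that $S_X$ is continuous and strictly decreasing on $[0,M]$ with $S_X(M)=0$ and that $g$ is strictly increasing with $g(0)=0$, $g(1)=1$.

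First I would observe that the integrand $t\mapsto g\big(S_X(t)\big)$ in \eqref{eq:Psi} is continuous on $[0,M]$ (composition of the continuous $S_X$ with the distortion function $g$), so the fundamental theorem of calculus applies to $\Psi(x)=(1+\theta)\int_x^M g\big(S_X(t)\big)\,dt$ and yields $\Psi'(x)=-(1+\theta)g\big(S_X(x)\big)$ for $x\in(0,M)$; differentiating $\Phi(x)=x+\Psi(x)$ then gives $\Phi'(x)=1-(1+\theta)g\big(S_X(x)\big)$ at once. For the monotonicity of $\Psi$, note that for $x\in(0,M)$ we have $S_X(x)\in(0,S_X(0)]$ since $S_X$ is strictly decreasing with $S_X(M)=0$, hence $g\big(S_X(x)\big)>0=g(0)$ because $g$ is strictly increasing; thus $\Psi'(x)<0$, so $\Psi$ is strictly decreasing on $(0,M)$ and, by continuity, on $[0,M]$.

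For $\Phi$, the sign of $\Phi'(x)$ is the sign of $\tfrac{1}{1+\theta}-g\big(S_X(x)\big)$, and $x\mapsto g\big(S_X(x)\big)$ is strictly decreasing (a strictly increasing function composed with a strictly decreasing one). I would then split into the two cases defining $d_s$ in \eqref{eq:ds}. If $\theta\le\theta_s=\tfrac{1}{g(S_X(0))}-1$, then $\tfrac{1}{1+\theta}\ge g\big(S_X(0)\big)>g\big(S_X(x)\big)$ for every $x\in(0,M)$, so $\Phi'(x)>0$ there, giving strict increase on $(0,M]=(d_s,M]$ while $[0,d_s)=\varnothing$. If $\theta>\theta_s$, then $0<\tfrac{1}{1+\theta}<g\big(S_X(0)\big)<1$, so $g^{-1}\big(\tfrac{1}{1+\theta}\big)\in(0,S_X(0))$ and $d_s=S_X^{-1}\big(g^{-1}(\tfrac{1}{1+\theta})\big)\in(0,M)$; by \eqref{eq:SX_SXinv}, $g\big(S_X(d_s)\big)=\tfrac{1}{1+\theta}$, so $\Phi'(d_s)=0$, whereas for $x<d_s$ strict monotonicity of $g\circ S_X$ gives $g\big(S_X(x)\big)>\tfrac{1}{1+\theta}$ and hence $\Phi'(x)<0$, and for $x>d_s$ it gives $\Phi'(x)>0$. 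In either case $\Phi$ is strictly decreasing on $[0,d_s)$ and strictly increasing on $(d_s,M]$, as claimed.

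The only mild point to be careful about — and the closest thing to an obstacle — is justifying the pointwise differentiation: one needs $t\mapsto g\big(S_X(t)\big)$ to be continuous at the point in question for FTC to produce the stated derivative there. This is ensured by continuity of $S_X$ together with continuity of the distortion function $g$ (and even if $g$ were merely monotone, it would be continuous off a countable set, which already suffices for the monotonicity conclusions, the derivative identities then holding at all continuity points). Beyond that, the proof is just the two sign checks above, which is why the statement is described as immediate.
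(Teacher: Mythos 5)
Your proof is correct and is precisely the ``immediate'' argument the paper omits: the fundamental theorem of calculus gives \eqref{eq:PsiPhi_prime}, and the sign of $\Phi'$ follows from the strict monotonicity of $g\circ S_X$ together with the definition of $d_s$ in \eqref{eq:ds}, split into the cases $\theta\le\theta_s$ and $\theta>\theta_s$. Your caveat about continuity of $g$ is a fair observation (the paper only states that $g$ is strictly increasing), but under the standard reading of a continuous distortion function the FTC step applies verbatim, and, as you note, even without it the monotonicity conclusions survive.
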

	
	For a fixed $d \in [0, M]$, Lemma \ref{lem:PsiPhiMonot} implies that
	\begin{equation}\label{eq:Pm_Optim}
		\begin{split}
			&\sup_{m} \left\{ \Phi(m) : d \le m \le \Psi^{-1} \big( \Phi(d) - w \big) \right\} \\
			&= \max \left\{ \Phi(d), \, \Phi \left( \Psi^{-1} \left( \Phi(d) - w \right) \right) \right\}.
		\end{split}
	\end{equation}
	In other words, the best maximum limit $m$, given a fixed deductible $d \in [0, M]$, is either $m = d$ or
	\begin{equation}\label{eq:m_Given_d}
		m = \Psi^{-1} \big( \Phi(d) - w \big),
	\end{equation}
	in which the former choice means buying no insurance.
	
	By substituting \eqref{eq:Pm_Optim} into \eqref{eq:PhiPsi_Optim}, and by noting that 
	$$
	\Phi \Big( \Psi^{-1}\big(\Phi(w) - w \big) \Big) - \Phi(w) = 0,
	$$
	we obtain the following single-variable optimization problem:
	\begin{equation}\label{eq:Pm_Optim2}
		\sup_{0 \le d \le w} \left\{ \Phi \Big(\Psi^{-1}\big(\Phi(d) - w \big)\Big) - \Phi(d) \right\}.
	\end{equation}
	Let $d^*$ be the optimal deductible for problem \eqref{eq:Pm_Optim2}. We have the following dichotomy:
	\begin{enumerate}
		\item[(a)] If $\Phi \left(\Psi^{-1}\big(\Phi(d^*) - w \big) \right) - \Phi(d^*) > 0$, then, by \eqref{eq:m_Given_d}, the maximum limit $m^*$ is given by
		\begin{equation}\label{eq:mStar_Given_dStar}
			m^* = \Psi^{-1} \big( \Phi(d^*) - w \big).
		\end{equation}
		
		\item[(b)] If $\Phi \left(\Psi^{-1}\big(\Phi(d^*) - w \big) \right) - \Phi(d^*) = 0$, then, we can choose $d^* = m^* = w$, and buying no insurance is the best choice for the policyholder.
	\end{enumerate}

	\noindent It only remains to solve \eqref{eq:Pm_Optim2}. From \eqref{eq:PsiPhi_prime}, we obtain
	\[
	\frac{d}{d d} \left[ \Phi \Big(\Psi^{-1}\big(\Phi(d) - w \big)\Big) - \Phi(d) \right] = - \frac{\Phi'(d)}{(1 + \theta) g \left( S_X \left( \Phi^{-1} \left( \Phi(d) - w \right) \right) \right)} \propto - \Phi'(d).
	\]
	Lemma \ref{lem:PsiPhiMonot}, then, yields that the objective function of \eqref{eq:Pm_Optim2} is strictly increasing on $[0, d_s)$ and strictly decreasing on $(d_s, w]$, with $d_s$ given by \eqref{eq:ds}.  It follows that $d_s$ is the optimal deductible if $d_s < w$.  Recall that we impose the condition that $d + \pi_I \le w$; in particular, $d < w$ if the individual buys insurance.
	
	We identify the following cases:
	\begin{enumerate}
		\item[(i)] $0 \le \theta \le \theta_s$:  By \eqref{eq:ds}, $d^* = d_s = 0$, and case (a) above applies. By \eqref{eq:mStar_Given_dStar},
		\[
		m^* = \Psi^{-1} \big( \Phi(0) - w \big) = \Psi^{-1} \big( \pi_X - w \big),
		\]
		or equivalently, $m^*$ uniquely solves \eqref{eq:mStar_d0}.  $m^* < M$ because $w < w_s = \Phi(0) = \pi_X$.
		
		\item[(ii)] $\theta > \theta_s$ and $w \le d_s$:  It follows that the objective function of \eqref{eq:Pm_Optim2} is strictly increasing on $[0, w)$. Thus, case (b) above holds, and it is optimal for the individual not to buy insurance.
		
		\item[(iii)] $\theta > \theta_s$ and $w > d_s$:  By \eqref{eq:ds}, $d^* = d_s > 0$, and by \eqref{eq:mStar_Given_dStar},
		\[
		m^* = \Psi^{-1} \big( \Phi(d_s) - w \big),
		\]
		or equivalently, $m^*$ uniquely solves \eqref{eq:mStar_dpos}.  $m^* < M$ because $w < w_s = \Phi(d_s)$, which implies that $w - d_s < (1 + \theta) \int_{d_s}^M g \big( S_X(t) \big) \, dt$.
		
		\qedhere
	\end{enumerate}


\begin{thebibliography}{10}
		
		\bibitem{A1963} Arrow, Kenneth J. (1963). Uncertainty and the welfare economics of medical care. {\it The American Economic Review}, 53(5): 941-973.
		
		\bibitem{Assa2015}  Assa, Hirbod (2015). On optimal reinsurance policy with distortion risk measures and premiums. {\it Insurance: Mathematics and Economics}, 61: 70-75.
		
		\bibitem{B1960} Borch, K., (1960). An attempt to determine the optimum amount of stop loss reinsurance. In: {\it Transactions of the 16th International Congress of Actuaries}, 597-610.
		
		\bibitem{B1960b} Borch, K. (1960). The safety loading of reinsurance premiums. {\it Scandinavian Actuarial Journal}, 1960(3-4): 163-184.
		
		\bibitem{CTWZ2008}  Cai, Jun, Ken Seng Tan, Chengguo Weng, and Yi Zhang (2008). Optimal reinsurance under VaR and CTE risk measures. {\it Insurance: Mathematics and Economics}, 43(1): 185-196.
		
		\bibitem{C2010} Cheung, Ka Chun (2010). Optimal reinsurance revisited - a geometric approach. {\it ASTIN Bulletin}, 40(1): 221-239.
		
		\bibitem{ChiTan2011}  Chi, Yichun and Ken Sang Tan (2011).  Optimal reinsurance under VaR and CVaR risk measures: a simplified approach. {\it ASTIN Bulletin}, 41(2): 487-509.
		
		\bibitem{CuiYangWu2013}  Cui, Wei, Jingping Yang, and Lan Wu (2013). Optimal reinsurance minimizing the distortion risk measure under general reinsurance premium principles. {\it Insurance: Mathematics and Economics}, 53: 74-85.
		
		\bibitem{GZ2004}  Gajek, Les\l aw and Dariusz Zagrodny (2004). Reinsurance arrangements maximizing insurer's survival probability. {\it Journal of Risk and Insurance}, 71(3): 421-435.
		
		\bibitem{K2005}  Kaluszka, Marek (2005). Truncated stop loss as optimal reinsurance agreement in one-period models. {\it ASTIN Bulletin}, 35(2): 337-349.
		
		
		\bibitem{WengZhuang2017}  Weng, Chengguo and Sheng Chao Zhuang (2017). CDF formulation for solving an optimal reinsurance problem. {\it Scandinavian Actuarial Journal}, 2017(5): 395-418.

		
		\bibitem{ZCY2015}  Zheng, Yanting, Wei Cui, and Jingping Yang (2015). Optimal reinsurance under distortion risk measures and expected value premium principle for reinsurer. {\it Journal of Systems Science Complex}, 28: 122-143.
		
		\bibitem{ZhuangWengTanAssa2016} Zhuang, Sheng Chao, Chengguo Weng, Ken Sang Tan, and Hirbod Assa (2016). Marginal Indemnification Function formulation for optimal reinsurance. {\it Insurance: Mathematics and Economics}, 67: 65-76.
		
	\end{thebibliography}
	

\end{document}